\setlist[itemize]{leftmargin=*,topsep=2pt,itemsep=2pt}
\setlist[enumerate]{leftmargin=*,topsep=2pt,itemsep=2pt}
\newcommand{\Cl}{\mathrm{Cl}}
\newcommand{\Id}{\mathrm{Id}}
\theoremstyle{definition}
\newtheorem{definition}{Definition}
\newtheorem{remark}{Remark}
\newtheorem{proposition}{Proposition}
\newtheorem{corollary}{Corollary}
\title{\textbf{From Dirac to Dunkl Operators through Symmetry Reduction}}
\author{
Cristina Sardón\thanks{\texttt{mariacristina.sardon@upm.es}} \\
Department of Applied Mathematics, Universidad Politécnica de Madrid, \\
Av.~Juan de Herrera 6, 28040 Madrid, Spain
}
\date{} 
\begin{document}

\maketitle

\begin{abstract}

This paper presents a geometric and analytic derivation of Dirac–Dunkl operators as symmetry reductions of the flat Dirac operator on Euclidean space. Starting from the standard Dirac operator, we restrict to a fundamental Weyl chamber of a finite Coxeter group equipped with the Heckman–Opdam measure, and determine the necessary drift and reflection corrections that ensure formal skew–adjointness under this weighted geometry. This procedure naturally reproduces the Dunkl operators as the unique first–order deformations compatible with reflection symmetry, whose Clifford contraction defines the Dirac–Dunkl operator and whose square yields the Dunkl Laplacian. We then extend the construction to include arbitrary unitary representations of the reflection group, obtaining representation-dependent Dirac–Dunkl operators that act on spinor- or matrix-valued functions. In the scalar and sign representations, these operators recover respectively the bosonic and fermionic Calogero–Moser systems, while higher-dimensional representations give rise to multi-component spin–Calogero models. The resulting framework unifies analytic, geometric, and representation-theoretic aspects of Dirac and Dunkl operators under a single symmetry-reduction principle.

\end{abstract}

\section{Introduction}

The Dirac operator is a paradigmatic first--order differential operator, intertwining
geometry, analysis, and representation theory. Its variants appear widely in mathematical
physics: from the Euclidean Dirac operator of spin geometry to Dunkl--type deformations
that encode reflection symmetries and inverse--square interactions. The latter, known as
\emph{Dirac--Dunkl operators}, play a central role in harmonic analysis on reflection groups,
in the theory of special functions, and in models of Calogero--Moser type \cite{Wadati1993}, \cite{Moser1975}, \cite{Calogero1971}, \cite{pere}.

In most of the literature, Dunkl operators are introduced axiomatically, defined by
combining directional derivatives with reflection terms in such a way that the resulting
family commutes \cite{Dunkl1989, DunklXu2001}. The associated Dirac--Dunkl operator is
then obtained by Clifford contraction. While this algebraic approach highlights the
commutativity and representation--theoretic structure, it leaves less transparent the
analytic origin of the reflection corrections and the role of the underlying geometry.

The aim of this paper is to derive Dirac--Dunkl operators {from first principles} \cite{heck1}, as
symmetry reductions of the free Dirac operator. Starting from the flat Dirac operator on
Euclidean space, we restrict to a fundamental Weyl chamber \cite{wall} of a finite Coxeter group \cite{Humphreys1990},\cite{Davis2008}.
equip it with the natural Heckman--Opdam measure \cite{HeckmanOpdam1987}, and determine
the necessary corrections to preserve formal skew--adjointness under integration by parts.
This procedure forces both the local ``drift'' terms and the reflection corrections,
reproducing precisely the Dunkl operators. Clifford contraction \cite{Lounesto2001} then produces the
Dirac--Dunkl operator, whose square yields the Dunkl Laplacian \cite{CastroSzarek2014}.

Hence, this paper provides a conceptual derivation of the Dirac--Dunkl operator as
the natural first--order operator compatible with reflection symmetry reduction \cite{etin1,etin2}. 

We denote by \(k : R \to \mathbb{C}\) {multiplicity function} associated with the
root system \(R\) of the reflection group \(W\), assigning a complex parameter \(k(\alpha)\)
to each root \(\alpha \in R\). Geometrically, the function \(k\) measures the strength of the
singular drift and reflection terms generated by the walls of the Weyl chambers.
When \(k=0\), these corrections vanish and one recovers the flat, translation--invariant
Dirac operator on \(\mathbb{R}^{n}\).

In the classical case ($k=0$), the operators $\partial_{\xi}$ are skew--adjoint
on the Hilbert space $L^2(\mathbb{R}^n,dx)$ (up to boundary terms), which is the
basis for Fourier analysis and the self--adjointness of the Laplacian. For $k \neq 0$, the Dunkl operators $T_{\xi}$ are no longer skew--adjoint with
respect to the standard $L^2$--inner product.  To recover the correct functional
analytic framework, one introduces the weighted inner product
\[
  \langle f,g \rangle_k \;=\; \int_{\mathbb{R}^n} f(x)\,\overline{g(x)}\,
  \delta_k(x)\,dx , \qquad
  \delta_k(x) = \prod_{\alpha \in R^+} |\langle \alpha, x \rangle|^{2k(\alpha)} .
\]
With this choice of weight, the Dunkl operators $T_{\xi}$ become skew--adjoint,
and the Dunkl Laplacian $\Delta_k$ is self--adjoint \cite{Dunkl1989}.  This is essential for the
development of Dunkl analysis: it allows the definition of a unitary Dunkl
transform, orthogonality of eigenfunctions, and the reduction of classical PDEs
to the Dunkl setting \cite{chere}, \cite{Dunkl3}, \cite{etin2}.

In \cite{CalvertDeMartino2021}, Calvert and De Martino construct a Dirac operator within 
the framework of the rational Cherednik algebra, focusing on the so-called 
{Dunkl angular momentum algebra}. Their aim is to obtain a Clifford--valued operator 
that serves as a square root of the angular Calogero--Moser Hamiltonian, the natural 
analogue of the Laplace--Dunkl operator in this setting. They introduce a Dirac element 
\(D\) built from Clifford generators tensored with Dunkl-type momentum elements, and 
show that, after the inclusion of a suitable correction term \(\phi\), the modified operator 
\(D_0 = D - \phi\) satisfies \(D_0^2 = \text{(Dunkl Laplacian)}\). This establishes a precise 
correspondence between Dirac and Laplace structures in the Dunkl context and provides 
a foundation for defining a notion of {Dirac cohomology} for representations of the 
Dunkl angular momentum algebra.
Unlike the algebraic construction in \cite{CalvertDeMartino2021}, which defines the Dirac 
operator abstractly within the rational Cherednik algebra, our present approach develops 
the {Dirac--Dunkl operator} directly as a first--order differential--difference operator 
acting on spinor--valued functions over the Euclidean space \(V\). Here, the focus lies on 
the analytic and geometric properties of the operator rather than its purely algebraic 
realization. Specifically, we start from the weighted \(L^{2}(V,\delta_{k}(x)\,dx)\) structure 
associated with the Heckman--Opdam measure and construct the Dirac--Dunkl operator 
as a deformation of the flat Dirac operator compatible with this weighted geometry. 
This formulation makes explicit the compensating drift and reflection mechanisms that 
guarantee formal skew--adjointness, providing a geometric interpretation of the 
Dunkl–Dirac correspondence beyond the algebraic setting.

\section{Fundamentals}

Let $V \simeq \mathbb{R}^{n}$ be a Euclidean space with inner product $\langle \cdot,\cdot \rangle$.
A {finite Coxeter group} $W \subset O(V)$ is a finite subgroup of the orthogonal group generated by reflections across
hyperplanes in $V$. There is a {root system} $R \subset V$ associated to $W$, consisting of
pairs $\{\pm \alpha\}$ such that each reflection
\[
s_\alpha(x) \;=\; x - \frac{2\langle \alpha, x\rangle}{\langle \alpha, \alpha\rangle} \,\alpha
\]
belongs to $W$. A choice of positive subsystem $R^{+} \subset R$ determines a
{fundamental chamber} $C \subset V$. A {multiplicity function} is a $W$--invariant map $k : R \to \mathbb{C}$, i.e.
$k(\alpha) = k(w\alpha)$ for all $w \in W$. In this space there is an associated {Heckman--Opdam weight} is
\begin{equation}\label{eq:HOweight}
  \delta_{k}(x) \;=\; \prod_{\alpha \in R^{+}} |\langle \alpha, x \rangle|^{2k(\alpha)} ,
\end{equation}
which provides the natural measure for harmonic analysis on $V$ with respect to $W$
\cite{HeckmanOpdam1987}.

\subsection{Dirac-Dunkl operators and the flat Dirac operator}
For any $\xi \in V$, the {Dunkl operator} $T_{\xi}$ is defined by
\begin{equation}\label{eq:Dunkl}
  T_{\xi} f(x) \;=\; \partial_{\xi} f(x) \;+\;
  \sum_{\alpha \in R^{+}}
  k(\alpha) \frac{\langle \alpha, \xi \rangle}{\langle \alpha, x \rangle}
  \Big( f(x) - f(s_{\alpha}x) \Big) ,
\end{equation}
where $\partial_{\xi}$ denotes the directional derivative in direction $\xi$, $\alpha$ is a root vector in $V$, $R^{+}$ is the chosen set of positive roots and $\langle \alpha, \xi \rangle$ and $\langle \alpha, x \rangle$ are inner products in $V$. This is the most general way that a Dunkl operator can be written down in any finite (Coxeter) group. 
The multiplicity function $k: R \to \mathbb{C}$ has the property that $k(w\alpha)=k(\alpha)$ for all $w\in W$.

The family $\{T_{\xi}\}_{\xi \in V}$ is commuting:
\begin{equation}\label{eq:commute}
  [T_{\xi}, T_{\eta}] = 0, \qquad \forall\, \xi, \eta \in V ,
\end{equation}
a fundamental fact due to Dunkl \cite{Dunkl1989}. The associated \emph{Dunkl Laplacian} is
\[
  \Delta_{k} \;=\; \sum_{a=1}^{n} T_{u_{a}}^{2},
\]
for any orthonormal basis $\{u_{a}\}$ of $V$.

Now, let $\{e_{a}\}_{a=1}^{n}$ be generators of the complex Clifford algebra $\Cl(V)$, satisfying
\begin{equation}\label{cliff}
  e_{a} e_{b} + e_{b} e_{a} = - 2 \delta_{ab}\Id .
\end{equation}
The flat Dirac operator on $V$ is
\begin{equation}\label{eq:DiracFlat}
  D \;=\; \sum_{a=1}^{n} e_{a} \partial_{u_{a}} ,
\end{equation}
By definition, the Dirac operator satisfies the Clifford-square identity $D^{2} = - \Delta$ ,
with $\Delta$ the standard Laplacian. This operator is formally skew-adjoint on
$L^{2}(V)$ with respect to Lebesgue measure.

We are now passing from the commuting first--order Dunkl family $\{T_{\xi}\}_{\xi\in V}$ to a Clifford--valued first--order operator by contraction with Clifford generators. Throughout,
using the defined bases $\{u_{a}\}_{a=1}^{n}$, an orthonormal basis of $V$, and 
$\{e_{a}\}_{a=1}^{n}$  the generators of the complex Clifford algebra $\Cl(V)$,
satisfying \eqref{cliff}. We write $T_{u_{a}} := T_{\xi}$ with $\xi=u_{a}$ and recall that
$[T_{u_{a}},T_{u_{b}}]=0$ \cite{Dunkl1989}.

The {Dirac--Dunkl operator} associated with a triple $(W,R,k)$ is written as:
\begin{equation}\label{eq:DiracDunkl}
  D_{k} \;=\; \sum_{a=1}^{n} e_{a}\,T_{u_{a}} .
\end{equation}
It acts on $\mathcal{S}$--valued functions, where $\mathcal{S}$ is a (finite--dimensional)
left $\Cl(V)$--module (``spinors''); by default we take $\mathcal{S}$ to be any fixed
complex spin module for $\Cl(V)$. Then, given a spinor--valued function 
\[
f : V \longrightarrow \mathcal{S}, \qquad 
x \mapsto f(x),
\] 
the operator $D_k$ acts by
\begin{equation}\label{DiracDunklop}
(D_k f)(x) \;=\; 
\sum_{a=1}^n e_a \, \big(T_{u_a} f\big)(x),
\end{equation}
that is, one first applies the Dunkl operator $T_{u_a}$ in the $u_a$--direction to each component of $f$, 
and then multiplies the result by the Clifford generator $e_a$ in the spinor module $\mathcal{S}$.

\begin{remark}[Basis independence]
If $\{u_{a}\}$ and $\{u'_{a}\}$ are orthonormal bases of $V$ with corresponding Clifford
generators $\{e_{a}\},\{e'_{a}\}$, then $D_{k}$ defined by \eqref{eq:DiracDunkl} is the
same operator: the change of basis is implemented by an orthogonal matrix and its spin
lift, and the Clifford contraction is invariant.
\end{remark}

In the following sections, we explain how the flat Dirac operator $D$ \eqref{eq:DiracFlat} can be reduced to a Dunkl-type deformation
compatible with the weighted geometry \eqref{eq:HOweight}, leading to the
Dirac--Dunkl operators \eqref{DiracDunklop}.

\section{Reduction of flat Dirac operators by finite Coxeter groups}

The passage from the flat Dirac operator to its Dunkl deformation is driven by the process of reduction by symmetry of Coxeter groups. This passage is based on two primordial steps:

\begin{itemize}
\item Restriction of the operator to the fundamental chamber.
\item Imposition of skew-adjointness of the operator with respect to the weighter measure \eqref{eq:HOweight}. 
\end{itemize}
\noindent
 In this section we analyze the procedure in detail. For it, we start with the definition of a weighted product in the reduced space.

 The weighted inner product
\begin{equation}\label{prodchamber}
  \langle f,g \rangle_k \;=\; \int_C f(x)\,\overline{g(x)}\,\delta_k(x)\,dx
\end{equation}
is defined on the Hilbert space
\[
  L^2\!\big(C,\delta_k(x)\,dx\big)
  \;=\;\left\{ f : C \to \mathbb{C} \;\;\Big|\;\;
  \int_C |f(x)|^2\,\delta_k(x)\,dx < \infty \right\}.
\]

Here $C \subset V$ is the fundamental chamber, and the measure is given by the weight
\[
  \delta_k(x) \;=\; \prod_{\alpha \in R^+} |\langle \alpha , x \rangle|^{2k(\alpha)} .
\]
For special values of \(k\), the weight reduces to familiar classical densities. For $k=0$,  we recover the plain Lebesgue measure on C, and for $k=1$ in type $A_{n-1}, \;\; \delta_{1}(x)$ is essentially the Vandermonde squared.

\begin{remark}
Since $\delta_k$ is $W$--invariant, one often extends the definition to the whole space,
and works with the Hilbert space $L^2\!\big(V,\delta_k(x)\,dx\big)$.

\end{remark}

In the unweighted case, with the standard Lebesgue measure on $V$, one has
\[
  \int f(x)\,\overline{\partial_{\xi} g(x)}\,dx
  \;=\; - \int (\partial_{\xi} f)(x)\,\overline{g(x)}\,dx ,
\]
so $\partial_{\xi}$ is skew--adjoint (up to boundary terms). In other words, this corresponds with the integration by parts identity (assuming the boundary terms vanish, e.g. for compactly supported smooth functions). With the weighted form $\langle \cdot,\cdot\rangle_k$ this fails for two reasons:
\begin{enumerate}
  \item The domain $C$ is bounded by reflection hyperplanes, and integration by parts
  produces extra boundary contributions along $\partial C$. These are called  the {wall terms} contribution along 
  \begin{equation}\label{effect1}
  \{ \langle \alpha, x \rangle = 0 \} \quad \text{for} \quad \alpha \in R^{+} 
  \end{equation}
  \item The weight $\delta_k(x)$ is not constant, and integration by parts yields an
  additional term involving
\begin{equation}\label{effect2}
    \frac{\partial_{\xi} \delta_k(x)}{\delta_k(x)}
    \;=\; \sum_{\alpha \in R^{+}} 2k(\alpha)\,
    \frac{\langle \alpha, \xi \rangle}{\langle \alpha, x \rangle}\,.
\end{equation}
 This is the {bulk drift}, derivatives fall on the weight $\delta_{k}$ and produce
a first--order term proportional to $\nabla \log \delta_{k}$.

\end{enumerate}

\noindent
These two effects (reflection and drift effect, correspondingly) are exactly compensated in the definition of the Dunkl operators,
 
\subsection{Drift compensation}
Let us start with the the drift correction first, for simplicity.
\begin{proposition}\label{prop1}
For $\xi \in V$, the adjoint of the directional derivative
$\partial_\xi$ with respect to the weighted inner product
$\langle \cdot , \cdot \rangle_k$ is given by
\[
  \partial_\xi^{*} \;=\; -\partial_\xi \;-\; \langle \nabla \log \delta_k(x), \xi \rangle ,
\]
ignoring boundary terms. In particular,
\[
  \nabla \log \delta_k(x) = \sum_{\alpha \in R^+} 
  \frac{2k(\alpha)}{\langle \alpha, x\rangle}\,\alpha .
\]

\end{proposition}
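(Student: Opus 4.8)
The plan is to compute the adjoint directly from the definition of the weighted inner product via integration by parts, treating the weight $\delta_k(x)\,dx$ as the relevant measure and deferring the boundary terms as instructed. First I would write out $\langle \partial_\xi f, g\rangle_k = \int_C (\partial_\xi f)(x)\,\overline{g(x)}\,\delta_k(x)\,dx$ and move the derivative off of $f$. The key observation is that the derivative now acts on the \emph{product} $\overline{g}\,\delta_k$, so the Leibniz rule produces two terms: one in which $\partial_\xi$ lands on $\overline{g}$ and one in which it lands on the weight $\delta_k$. This is the structural reason the adjoint acquires an extra zeroth-order piece absent in the unweighted case.

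Concretely, I would use the identity $\partial_\xi\!\big(\overline{g}\,\delta_k\big) = (\partial_\xi \overline{g})\,\delta_k + \overline{g}\,(\partial_\xi \delta_k)$ and rewrite $\partial_\xi \delta_k = \delta_k\,\partial_\xi \log \delta_k = \delta_k\,\langle \nabla \log \delta_k, \xi\rangle$, which is legitimate wherever $\delta_k > 0$, i.e. in the interior of the chamber $C$. Substituting and discarding the boundary term then yields
\[
  \langle \partial_\xi f, g\rangle_k
  = \int_C f\,\Big(-\,\overline{\partial_\xi g} \;-\; \overline{g}\,\langle \nabla \log \delta_k, \xi\rangle\Big)\,\delta_k\,dx,
\]
from which one reads off $\partial_\xi^* = -\partial_\xi - \langle \nabla \log \delta_k(x), \xi\rangle$, the multiplication operator being real and hence unaffected by the conjugation. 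The second assertion is a direct computation: taking the logarithm of \eqref{eq:HOweight} gives $\log \delta_k = \sum_{\alpha \in R^+} 2k(\alpha)\,\log|\langle \alpha, x\rangle|$, and differentiating each summand using $\nabla \log|\langle \alpha, x\rangle| = \alpha/\langle \alpha, x\rangle$ reproduces the stated formula; this is exactly the bulk drift \eqref{effect2} contracted with $\xi$.

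The main obstacle is not the formal manipulation but justifying that the boundary/wall terms may legitimately be ignored and that the logarithmic derivative is well-defined. On the walls $\{\langle \alpha, x\rangle = 0\}$ the weight $\delta_k$ either vanishes or blows up, so $\nabla \log \delta_k$ has a non-integrable-looking singularity, and a fully rigorous statement would restrict to functions supported in the open chamber away from the walls, or invoke $\mathrm{Re}\,k(\alpha)$ conditions guaranteeing that the boundary contributions $\int_{\partial C} f\,\overline{g}\,\delta_k\,\langle \alpha, \nu\rangle\,dS$ vanish as one approaches each wall. Since the proposition explicitly says \emph{ignoring boundary terms}, I would flag this caveat but carry out the computation on the dense subspace where it is unambiguous; the genuine analytic content (matching the wall singularities against the reflection terms) is precisely what the reflection correction of the full Dunkl operator is designed to handle, and is deferred to the subsequent subsection.
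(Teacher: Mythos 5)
Your proposal is correct and follows essentially the same route as the paper: integration by parts against the measure $\delta_k(x)\,dx$, with the Leibniz rule producing the extra term $\partial_\xi \delta_k = \delta_k\,\langle\nabla\log\delta_k,\xi\rangle$, and the drift formula obtained by differentiating $\log\delta_k = \sum_{\alpha\in R^+}2k(\alpha)\log|\langle\alpha,x\rangle|$. Your additional caveat about the wall singularities and the conditions under which the boundary terms genuinely vanish is a sensible refinement of the paper's remark, but does not change the argument.
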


\begin{proof}
Considering the weighted inner product
\[
  \langle f,g \rangle_k
  \;=\;
  \int_C f(x)\,\overline{g(x)}\,\delta_k(x)\,dx .
\]
\noindent
Integrating by parts in this weighted measure, the derivative does not only move across, but it is picked on the weight \(\delta_k(x)\), i.e., 

\[
  \int f(x)\,\overline{\partial_\xi g(x)}\,\delta_k(x)\,dx  = -\int (\partial_\xi f)(x)\,\overline{g(x)}\,\delta_k(x)\,dx
    \;-\; \int f(x)\,\overline{g(x)}\,(\partial_\xi \delta_k(x))\,dx .
\]

Now divide the second term by \(\delta_k(x)\) on the right-hand side to write it as
\[
  - \int f(x)\,\overline{g(x)}\,
     \frac{\partial_\xi \delta_k(x)}{\delta_k(x)}\,\delta_k(x)\,dx .
\]

But
\begin{equation}\label{division}
  \frac{\partial_\xi \delta_k(x)}{\delta_k(x)}
  \;=\; \langle \nabla \log \delta_k(x), \xi \rangle.
\end{equation}

The previous expression comes from the fact that the directional derivative of a smooth function $f$ in the direction $\xi$ is defined by $\partial_\xi f(x) := \langle \nabla f(x), \xi \rangle$. Applying this to $f(x) = \log \delta_k(x)$ and using the chain rule  we retrieve \eqref{division}.

So altogether:
\[
  \langle f, \partial_\xi g \rangle_k
  \;=\;
  \langle \big(-\partial_\xi - \langle \nabla \log \delta_k(x), \xi \rangle\big) f , g \rangle_k .
\]

which is the definition of the adjoint \(\partial_\xi^*\):
\begin{equation}\label{defadjoint}
  \partial_\xi^* \;=\; -\partial_\xi \;-\; \langle \nabla \log \delta_k(x), \xi \rangle .
\end{equation}
See that one retrieves $\partial_{\xi}^{*} \;=\; -\partial_{\xi}$, the plain Lebesgue inner product, by canceling out the term on the right-hand side involving the special measure $\delta_k(x).$

\begin{remark}
The above calculation is to be understood in the sense that we ignore potential boundary contributions arising from
integration by parts.  For functions with compact support inside $C$
such terms vanish, and the formula for $\partial_\xi^{*}$ is \eqref{defadjoint}.
For more general domains of definition, one has to specify suitable
boundary conditions to obtain a rigorous operator-theoretic adjoint.
\end{remark}

We define the drift term:
\begin{equation}\label{driftdef}
  \nabla \log \delta_{k}(x)
  = \sum_{\alpha \in R^{+}} \frac{2k(\alpha)}{\langle \alpha,x\rangle}\, \alpha .
\end{equation}
\noindent
which appears from the following computation on the weight
\[
  \delta_k(x) \;=\; \prod_{\alpha \in R^+} |\langle \alpha , x \rangle|^{2k(\alpha)} .
\]

Taking logarithm on both sides of the previous expression, 
\[
  \log \delta_k(x)
  \;=\; \sum_{\alpha \in R^+} 2k(\alpha)\, \log |\langle \alpha , x \rangle| 
\]
\noindent
and further differentiation on both sides (gradient) using that \(\nabla \langle \alpha , x \rangle = \alpha\) provides:
\[
  \nabla \log \delta_k(x)
  \;=\; \sum_{\alpha \in R^+} \frac{2k(\alpha)}{\langle \alpha , x \rangle}\, \alpha .
\]

Again, with the definition of  the directional derivative
\(\partial_\xi := \langle \nabla , \xi \rangle\),  the correction term in the adjoint formula is
\[
  \langle \nabla \log \delta_k(x), \xi \rangle
  \;=\; \sum_{\alpha \in R^+} 
         \frac{2k(\alpha)\langle \alpha , \xi \rangle}{\langle \alpha , x \rangle}.
\]

\end{proof}

In addition to the bulk drift term, integration by parts may
produce contributions supported on the chamber walls
$\{\langle \alpha, x \rangle = 0\}$. Concluding, the expression would be

\begin{equation}\label{factor2}
  \int_{C} (\partial_{\xi} f)(x)\, \overline{g(x)}\, \delta_{k}(x)\, dx
  = - \int_{C} f(x)\, \overline{(\partial_{\xi} g)(x)}\, \delta_{k}(x)\, dx
    + \text{(boundary terms)}.
\end{equation}

\begin{proposition}[Drift compensator]
The bulk drift can be canceled by introducing the modified operator
\begin{equation}\label{eq:localcomp}
  \widetilde{T}_{\xi}
  \;=\; \partial_{\xi}
  \;+\; \sum_{\alpha \in R^{+}} k(\alpha)\,
       \frac{\langle \alpha, \xi \rangle}{\langle \alpha, x \rangle},
\end{equation}
which is formally skew--adjoint with
respect to $\langle \cdot , \cdot \rangle_{k}$, up to the boundary terms along
$\{\langle \alpha, x \rangle = 0\}$:
\[
  \langle \widetilde{T}_{\xi} f, g \rangle_{k}
  \;=\; - \langle f, \widetilde{T}_{\xi} g \rangle_{k}
        \;+\; \text{(boundary terms)}.
\]
\end{proposition}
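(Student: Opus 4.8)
The plan is to reduce the statement to Proposition~\ref{prop1} by splitting $\widetilde{T}_\xi$ into its derivative and multiplication parts and handling each adjoint separately. First I would write $\widetilde{T}_\xi = \partial_\xi + M_\xi$, where $M_\xi$ is multiplication by $m_\xi(x) := \sum_{\alpha\in R^+} k(\alpha)\,\langle\alpha,\xi\rangle/\langle\alpha,x\rangle$. Since $\delta_k$ must be a genuine positive weight for $\langle\cdot,\cdot\rangle_k$ to define an inner product, the exponents $2k(\alpha)$ are real, so $m_\xi$ is real-valued on the interior of $C$; and because the adjoint of any multiplication operator with respect to $\langle\cdot,\cdot\rangle_k$ is multiplication by the complex conjugate, this gives immediately $M_\xi^{*} = M_\xi$.

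Next I would invoke Proposition~\ref{prop1}, which yields $\partial_\xi^{*} = -\partial_\xi - \langle\nabla\log\delta_k(x),\xi\rangle$ up to boundary terms, together with the explicit value $\langle\nabla\log\delta_k(x),\xi\rangle = \sum_{\alpha\in R^+} 2k(\alpha)\langle\alpha,\xi\rangle/\langle\alpha,x\rangle = 2\,m_\xi(x)$. The conceptual core of the statement is precisely that the compensator in $\widetilde{T}_\xi$ carries the coefficient $k(\alpha)$ while the logarithmic drift carries $2k(\alpha)$: the compensator is exactly \emph{half} the drift. Summing the two adjoints,
\[
  \widetilde{T}_\xi^{*} = \partial_\xi^{*} + M_\xi^{*} = -\partial_\xi - 2m_\xi + m_\xi = -\partial_\xi - m_\xi = -\widetilde{T}_\xi,
\]
up to boundary terms, which is the asserted identity $\langle\widetilde{T}_\xi f, g\rangle_k = -\langle f,\widetilde{T}_\xi g\rangle_k + (\text{boundary terms})$.

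Equivalently, and more transparently, I would carry out a single direct integration by parts on the symmetric combination $\langle\widetilde{T}_\xi f, g\rangle_k + \langle f,\widetilde{T}_\xi g\rangle_k$. Expanding the derivative contributions with the Leibniz rule $\partial_\xi(f\,\overline{g}\,\delta_k) = (\partial_\xi f)\,\overline{g}\,\delta_k + f\,(\partial_\xi\overline{g})\,\delta_k + f\,\overline{g}\,(\partial_\xi\delta_k)$ and integrating over $C$, the total-derivative piece leaves only wall contributions, the weight-derivative piece contributes $-\int_C f\,\overline{g}\,(2m_\xi)\,\delta_k\,dx$ (using \eqref{effect2}), and the two multiplication terms contribute $+\int_C f\,\overline{g}\,(2m_\xi)\,\delta_k\,dx$ (using that $m_\xi$ is real). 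These cancel identically, leaving only the boundary terms along $\{\langle\alpha,x\rangle=0\}$.

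The hard part is not the algebra — the cancellation is forced once the factor of two is in place — but the honest treatment of the boundary flux. Integration by parts produces terms supported on the walls $\{\langle\alpha,x\rangle=0\}$, where $\delta_k$ vanishes or blows up according to the sign of $k(\alpha)$, so establishing that these genuinely vanish (rather than merely being discarded) requires controlling the near-wall growth of $f\,\overline{g}\,\delta_k$. This holds for functions compactly supported in the interior of $C$, and more generally under suitable decay and integrability conditions; in the present formal setting these contributions are simply collected, exactly in the spirit of the remark following Proposition~\ref{prop1}.
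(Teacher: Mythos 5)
Your proposal is correct and follows essentially the same route as the paper: decompose $\widetilde{T}_\xi$ into $\partial_\xi$ plus a real multiplication operator, invoke Proposition~\ref{prop1} for the adjoint of $\partial_\xi$, and observe that the drift coefficient $2k(\alpha)$ combines with the compensator coefficient $k(\alpha)$ to give $-k(\alpha)$, yielding $\widetilde{T}_\xi^{*}=-\widetilde{T}_\xi$ up to boundary terms. Your additional remarks on the reality of $k(\alpha)$ and on the near-wall behaviour of $f\,\overline{g}\,\delta_k$ are sensible refinements but do not change the argument.
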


\begin{proof}
Let $f,g \in C_c^\infty(C)$ be smooth functions compactly supported in the interior of $C$.
Recall that the weighted inner product is
\[
  \langle f,g \rangle_k = \int_C f(x)\,\overline{g(x)}\,\delta_k(x)\,dx,
\]
and that, by Proposition~\ref{prop1},
\[
  (\partial_\xi)^* = -\partial_\xi - \langle \nabla \log \delta_k(x), \xi \rangle
  \qquad\text{with}\qquad
  \nabla \log \delta_k(x)
  = \sum_{\alpha \in R^+} \frac{2k(\alpha)}{\langle \alpha, x\rangle}\,\alpha.
\]

\noindent
We now compute the adjoint of
\[
  \widetilde{T}_\xi f
  = \partial_\xi f
    + \sum_{\alpha \in R^+} k(\alpha)\,
      \frac{\langle \alpha,\xi\rangle}{\langle \alpha,x\rangle}\,f(x)
  = \partial_\xi f + M_\xi(x)f(x),
\]
where we denote by
\[
  M_\xi(x) := \sum_{\alpha \in R^+} k(\alpha)\,\frac{\langle \alpha,\xi\rangle}{\langle \alpha,x\rangle}
\]
the multiplication operator by the (real-valued) function $M_\xi(x)$. Since $M_\xi(x)$ is real, it is self-adjoint $(M_\xi)^* = M_\xi$. Using the formula for $(\partial_\xi)^*$ above, we obtain
\[
  (\widetilde{T}_\xi)^*
  = (\partial_\xi)^* + (M_\xi)^*
  = \big(-\partial_\xi - \langle \nabla \log \delta_k(x), \xi \rangle\big)
    + M_\xi(x).
\]

\noindent
Substituting the explicit expression for $\nabla \log \delta_k(x)$, we find
\[
  (\widetilde{T}_\xi)^*
  = -\partial_\xi
    - \sum_{\alpha \in R^+} \frac{2k(\alpha)\langle \alpha,\xi\rangle}{\langle \alpha,x\rangle}
    + \sum_{\alpha \in R^+} k(\alpha)\,\frac{\langle \alpha,\xi\rangle}{\langle \alpha,x\rangle}.
\]
The factor $2$ in the middle term originates from differentiating the weight
$\delta_k(x)$, as $\nabla \log \delta_k$ contributes a term of this form in the adjoint of $\partial_\xi$.
Combining the two sums gives
\[
  (\widetilde{T}_\xi)^*
  = -\partial_\xi
    - \sum_{\alpha \in R^+} k(\alpha)\,\frac{\langle \alpha,\xi\rangle}{\langle \alpha,x\rangle}.
\]

\noindent
The right-hand side is precisely $-\widetilde{T}_\xi$. Therefore,
\[
  (\widetilde{T}_\xi)^* = -\widetilde{T}_\xi,
\]
up to boundary terms, completing the proof.
\end{proof}

The boundary contributions arising from integration by parts cannot be removed by any
local differential correction. In the next subsection we focus on the boundary conditions arising from the integration, to which we will refer to as the reflection correction.

\subsection{Reflection correction}
The first effect that we mentioned in \eqref{effect1} to be compensated is due to the extra boundary contribution across $\partial C$.
These remaining boundary terms can be eliminated by introducing reflection operators. If we worked only inside the fundamental chamber $C$, integration by parts would leave unwanted boundary terms along the walls $\partial C$. To avoid that, we extend the integration to all of $V$, where there are no chamber boundaries. The price is that reflections appear, but since the weight $\delta_k$ is $W$--invariant, these reflection terms exactly cancel the boundary contributions \cite{Dunkl1989,HeckmanOpdam1987,Rosler2003}.

\begin{proposition}[Reflection compensator]\label{prop:ReflectionCompensator}
Let $(W,R)$ be a finite Coxeter system with multiplicity function $k$.
For each $\xi \in V$, define the Dunkl operator
\begin{equation}\label{eq:DunklDef}
  T_{\xi} f(x) \;=\; \partial_{\xi} f(x)
   + \sum_{\alpha \in R^{+}} k(\alpha)\,
     \frac{\langle \alpha, \xi \rangle}{\langle \alpha,x\rangle}
     \big( f(x) - f(s_{\alpha}x) \big) .
\end{equation}
Here $s_{\alpha}$ denotes the reflection across the wall
$\langle \alpha, x \rangle = 0$, mapping the fundamental chamber $C$
to its neighbor. The operator $T_{\xi}$ combines:
\begin{itemize}
  \item (i) a local compensator term, canceling the bulk drift produced by the weight $\delta_k$, associated with:
  \[
    \sum_{\alpha \in R^+} k(\alpha)\,\frac{\langle \alpha,\xi\rangle}{\langle \alpha,x\rangle}
  \]
  \item (ii) reflection difference term $f(x) - f(s_\alpha x)$ is responsible for canceling the wall contributions generated by integration by parts across the hyperplanes. 
\end{itemize}
\noindent
As a result, $T_{\xi}$ is formally skew--adjoint on $L^{2}(C,\delta_{k}(x)\,dx)$.
\end{proposition}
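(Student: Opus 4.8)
The plan is to prove the anti-symmetry identity $\langle T_\xi f, g\rangle_k = -\langle f, T_\xi g\rangle_k$ for $f,g \in C_c^\infty$, which is exactly the assertion $T_\xi^\ast = -T_\xi$ (I take $k(\alpha)\in\R$ so that the Hermitian pairing behaves well; for complex $k$ one works with the bilinear form). The conceptual move, following the paragraph preceding the statement, is to pass from the chamber $C$ to the full space $V$. Since $\delta_k$ is $W$-invariant and $C$ is a fundamental domain, the pairing on $L^2(C,\delta_k\,dx)$ is recovered from the one on $L^2(V,\delta_k\,dx)$, but on $V$ the reflection hyperplanes $H_\alpha=\{\langle\alpha,x\rangle=0\}$ are interior null sets rather than a genuine boundary. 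For $k(\alpha)>0$ the weight $\delta_k$ vanishes on each $H_\alpha$, so integration by parts over $V$ produces no wall contribution (and none at infinity, for compactly supported data). This is the payoff: the boundary terms left uncompensated by the Drift Compensator proposition are traded for the reflection terms already present in $T_\xi$.

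Writing $T_\xi = \partial_\xi + \sum_{\alpha\in R^+} k(\alpha)\langle\alpha,\xi\rangle\,\Delta_\alpha$ with $\Delta_\alpha f(x)=\dfrac{f(x)-f(s_\alpha x)}{\langle\alpha,x\rangle}$, I would evaluate the symmetric combination $\langle T_\xi f, g\rangle_k + \langle f, T_\xi g\rangle_k$ and show it vanishes. For the derivative part, integration by parts against the weight moves $\partial_\xi$ onto $\overline{g}\,\delta_k$; the two cross terms $\pm\int_V f\,(\partial_\xi\overline{g})\,\delta_k\,dx$ cancel, and what survives is the drift residue
\[ -\int_V f(x)\,\overline{g(x)}\,\big(\partial_\xi\log\delta_k(x)\big)\,\delta_k(x)\,dx \;=\; -\sum_{\alpha\in R^+} 2k(\alpha)\langle\alpha,\xi\rangle\int_V \frac{f(x)\,\overline{g(x)}}{\langle\alpha,x\rangle}\,\delta_k(x)\,dx, \]
where I use the explicit gradient from Proposition~\ref{prop1}; the factor $2$ is precisely the one coming from differentiating the exponent $2k(\alpha)$ in $\delta_k$.

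For the difference part I would treat the reflected summand by the substitution $x\mapsto s_\alpha x$. Because $s_\alpha\in O(V)$ has unit Jacobian, $\delta_k\circ s_\alpha=\delta_k$, and $\langle\alpha,s_\alpha x\rangle=\langle s_\alpha\alpha,x\rangle=-\langle\alpha,x\rangle$, one gets $\int_V \frac{f(s_\alpha x)\,\overline{g(x)}}{\langle\alpha,x\rangle}\,\delta_k\,dx = -\int_V \frac{f(x)\,\overline{g(s_\alpha x)}}{\langle\alpha,x\rangle}\,\delta_k\,dx$. Adding the difference contributions from $\langle T_\xi f,g\rangle_k$ and $\langle f,T_\xi g\rangle_k$, the $\overline{g(s_\alpha x)}$ pieces combine and drop out, leaving
\[ +\sum_{\alpha\in R^+} 2k(\alpha)\langle\alpha,\xi\rangle\int_V \frac{f(x)\,\overline{g(x)}}{\langle\alpha,x\rangle}\,\delta_k(x)\,dx, \]
which is exactly the negative of the drift residue. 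Hence the total sum is zero and $T_\xi^\ast=-T_\xi$. The matching of the two factors of $2$ — weight differentiation on one side, reflection doubling on the other — is the precise content of the slogan that the drift and reflection corrections compensate one another.

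The main obstacle is analytic rather than algebraic: the individual integrals above carry a non-integrable $1/\langle\alpha,x\rangle$ singularity on $H_\alpha$, so the above manipulations are only legitimate for the full antisymmetric combination, not term by term. I would make them rigorous by excising $\varepsilon$-neighborhoods $\{|\langle\alpha,x\rangle|>\varepsilon\}$ of the walls, carrying out the integration by parts and the reflection substitution on the regularized domain, and then letting $\varepsilon\to0$. The key point to verify is that the residual surface integrals over $\{|\langle\alpha,x\rangle|=\varepsilon\}$ vanish in the limit; this holds because $\delta_k\sim|\langle\alpha,x\rangle|^{2k(\alpha)}\to0$ there for $k(\alpha)>0$, while the difference quotient $\Delta_\alpha f$ remains bounded near the wall since the numerator $f(x)-f(s_\alpha x)$ vanishes on $H_\alpha$. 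This controlled near-wall behavior, together with the $W$-invariance of $\delta_k$, is what turns the formal cancellation into a genuine proof.
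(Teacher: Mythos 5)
Your proposal is correct and follows essentially the same route as the paper's proof: extend the integration from $C$ to $V$, integrate the derivative part by parts against $\delta_k\,dx$ to expose the drift residue with its factor of $2$, and apply the substitution $x\mapsto s_\alpha x$ (using $W$-invariance of $\delta_k$ and $\langle\alpha,s_\alpha x\rangle=-\langle\alpha,x\rangle$) to the reflected term so that the two effects cancel. Your only departures are cosmetic or additive: you organize the computation as the vanishing of the symmetrized sum $\langle T_\xi f,g\rangle_k+\langle f,T_\xi g\rangle_k$ rather than computing $\langle T_\xi f,g\rangle_k$ directly, and your closing $\varepsilon$-excision argument near the walls supplies a justification of the integration by parts (where $\delta_k$ is not smooth) that the paper's formal computation omits.
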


\begin{proof}

The proof consists of the two steps above: (i) the local compensator cancels the bulk drift
arising from differentiation of $\delta_{k}$, and (ii) the reflection terms cancel the
boundary contributions across the walls, since the weight $\delta_{k}$ is $W$--invariant.

Since $T_{\xi}$ is  skew--adjoint (modulo boundary terms), the following identity holds
\[
  \langle T_{\xi} f, g \rangle_{k}
  = - \langle f, T_{\xi} g \rangle_{k}, \qquad
  \forall f,g \in C_{c}^{\infty}(C).
\]

Let $f,g \in C_c^\infty(V)$ be smooth compactly supported functions. We compute
\[
  \langle T_\xi f, g \rangle_k
  = \int_V (T_\xi f)(x)\,\overline{g(x)}\,\delta_k(x)\,dx,
\]
where

\begin{equation}\label{secondterm}
  T_\xi f(x)
  = \partial_\xi f(x)
    + \underbrace{
        \sum_{\alpha \in R^+} k(\alpha)\,\frac{\langle \alpha,\xi\rangle}{\langle \alpha,x\rangle}\,
        \big( f(x) - f(s_\alpha x)\big)
      }_{\text{correction term}} .
\end{equation}

\begin{enumerate}

\item \textsl{Contribution from $\partial_\xi f$.}

By integration by parts we have
\begin{align*}
I_0=\int_V (\partial_\xi f)(x)\,\overline{g(x)}\,\delta_k(x)\,dx
&= - \int_V f(x)\,\overline{(\partial_\xi g)(x)}\,\delta_k(x)\,dx \\
&\quad - \int_V f(x)\,\overline{g(x)}\,(\partial_\xi \delta_k)(x)\,dx.
\end{align*}
Notice that here integration by parts picks up on the measure $d\mu_k(x) = \delta_k(x)\,dx.$ Using the definition of the drift given in \eqref{driftdef}, the previous integration by parts becomes
\begin{align}\label{eq:adjoint-derivative}
I_0=\int_V (\partial_\xi f)(x)\,\overline{g(x)}\,\delta_k(x)\,dx
&= - \int_V f(x)\,\overline{(\partial_\xi g)(x)}\,\delta_k(x)\,dx \nonumber\\
 & - \sum_{\alpha \in R^+}\int_V
    f(x)\,\overline{g(x)}\,\frac{2k(\alpha)\langle \alpha,\xi\rangle}{\langle \alpha,x\rangle}\,\delta_k(x)\,dx.
\end{align}

\item \textsl{Contribution from the correction term.}  
Now consider
\[
  \sum_{\alpha \in R^+} k(\alpha)\int_V
    \frac{\langle \alpha,\xi\rangle}{\langle \alpha,x\rangle}\,
      \big( f(x) - f(s_\alpha x)\big)\,
      \overline{g(x)}\,\delta_k(x)\,dx.
\]

Split this into two parts:
\[
  I_1 = \sum_{\alpha \in R^+} k(\alpha)\int_V
    \frac{\langle \alpha,\xi\rangle}{\langle \alpha,x\rangle}\,
      f(x)\,\overline{g(x)}\,\delta_k(x)\,dx,
\]
and
\[
  I_2 = -\sum_{\alpha \in R^+} k(\alpha)\int_V
    \frac{\langle \alpha,\xi\rangle}{\langle \alpha,x\rangle}\,
      f(s_\alpha x)\,\overline{g(x)}\,\delta_k(x)\,dx.
\]

\medskip

\noindent
\item \textsl{Simplify $I_2$ using $W$--invariance.}  

Under the change of variables $y = s_\alpha x$, and that the reflection 
\begin{equation}\label{reflection1}
s_{\alpha}(x) = x - \frac{2\langle \alpha, x \rangle}{\langle \alpha, \alpha \rangle}\,\alpha,
\end{equation}
is an orthogonal linear map with $\det(s_\alpha) = -1$, we have $dx = dy$. Moreover,
since $\delta_k$ is $W$--invariant, one has $\delta_k(s_\alpha y) = \delta_k(y)$.
Therefore, the weighted measure $\delta_k(x)\,dx$ is invariant under $s_\alpha$.

Thus,
\begin{equation}\label{integral2}
I_2 = -\sum_{\alpha \in R^+} k(\alpha)\int_V
    \frac{\langle \alpha,\xi\rangle}{\langle \alpha,s_\alpha y\rangle}\,
      f(y)\,\overline{g(s_\alpha y)}\,\delta_k(y)\,dy.
\end{equation}
\noindent
Now use the fact that $s_\alpha$ acts on $\alpha$ by $s_\alpha(\alpha) = -\alpha$, which can be trivially seen by setting $\alpha=x$ in \eqref{reflection1}, and now applying $s_\alpha$ to a generic vector $y \in V$:
\[
s_\alpha(y) = y - \frac{2\langle \alpha, y \rangle}{\langle \alpha, \alpha \rangle}\,\alpha.
\]
and taking the inner product of this with $\alpha$:
\[
\langle \alpha, s_\alpha y \rangle
= \langle \alpha, y \rangle
  - \frac{2\langle \alpha, y \rangle}{\langle \alpha, \alpha \rangle}\langle \alpha, \alpha \rangle
= \langle \alpha, y \rangle - 2\langle \alpha, y \rangle
= -\langle \alpha, y \rangle,
\]
we obtain
\[
\langle \alpha, s_\alpha y \rangle = -\langle \alpha, y \rangle.
\]

Therefore, we can transform the integrand in \eqref{integral2} as
\[
  \frac{\langle \alpha,\xi\rangle}{\langle \alpha,s_\alpha y\rangle}
  = -\,\frac{\langle \alpha,\xi\rangle}{\langle \alpha,y\rangle}.
\]

and follows that
\[
  I_2 = \sum_{\alpha \in R^+} k(\alpha)\int_V
    \frac{\langle \alpha,\xi\rangle}{\langle \alpha,y\rangle}\,
      f(y)\,\overline{g(s_\alpha y)}\,\delta_k(y)\,dy.
\]

\item \textsl{Combine all contributions.}  

Gathering the results from the previous steps, we can express
\[
  \langle T_\xi f, g \rangle_k
  = I_0+I_1+I_2
\]

where ssing \eqref{eq:adjoint-derivative}, \eqref{integral2}, and the simplifications above, these are explicitly:
\begin{align*}
I_0
  &= -\int_V f(x)\,\overline{(\partial_\xi g)(x)}\,\delta_k(x)\,dx
     - \sum_{\alpha \in R^+}\int_V
        f(x)\,\overline{g(x)}\,
        \frac{2k(\alpha)\langle \alpha,\xi\rangle}{\langle \alpha,x\rangle}\,
        \delta_k(x)\,dx,\\[3mm]
I_1
  &= \sum_{\alpha \in R^+}\int_V
        f(x)\,\overline{g(x)}\,
        k(\alpha)\frac{\langle \alpha,\xi\rangle}{\langle \alpha,x\rangle}\,
        \delta_k(x)\,dx,\\[3mm]
I_2
  &= \sum_{\alpha \in R^+}\int_V
        f(x)\,\overline{g(s_\alpha x)}\,
        k(\alpha)\frac{\langle \alpha,\xi\rangle}{\langle \alpha,x\rangle}\,
        \delta_k(x)\,dx.
\end{align*}

Now sum we sum these three integrals.
The first term in $I_0$ already matches the expected derivative term
$-\partial_\xi g$ in the adjoint.  
The second term in $I_0$ combines with $I_1$:
\[
-2k(\alpha) + k(\alpha) = -k(\alpha),
\]
so that the remaining bulk term is
\[
 -\sum_{\alpha \in R^+} k(\alpha)
   \frac{\langle \alpha,\xi\rangle}{\langle \alpha,x\rangle}
   f(x)\,\overline{g(x)}\,\delta_k(x).
\]
Finally, $I_2$ contributes the reflection integral
\[
 \sum_{\alpha \in R^+} k(\alpha)\int_V
   \frac{\langle \alpha,\xi\rangle}{\langle \alpha,x\rangle}
   f(x)\,\overline{g(s_\alpha x)}\,\delta_k(x)\,dx.
\]

Combining everything, we arrive at
\begin{align*}
\langle T_\xi f, g \rangle_k
  &= -\int_V f(x)\,\overline{(\partial_\xi g)(x)}\,\delta_k(x)\,dx \\
  &\quad - \sum_{\alpha \in R^+} k(\alpha)\int_V
       \frac{\langle \alpha,\xi\rangle}{\langle \alpha,x\rangle}\,
       f(x)\,\overline{g(x)}\,\delta_k(x)\,dx \\
  &\quad + \sum_{\alpha \in R^+} k(\alpha)\int_V
       \frac{\langle \alpha,\xi\rangle}{\langle \alpha,x\rangle}\,
       f(x)\,\overline{g(s_\alpha x)}\,\delta_k(x)\,dx,
\end{align*}
which coincides with $-\langle f, T_\xi g\rangle_k$.

\end{enumerate}
\end{proof}

\medskip

\noindent
We conclude that the Dunkl operators \eqref{eq:DunklDef} constitute the natural deformation of the directional derivatives $\partial_\xi$ compatible with both the bulk structure induced by $\delta_k$ and the reflection symmetries of the Coxeter group $W$.

\section{Dirac--Dunkl construction}

In analogy with the flat Dirac operator $D = \sum_{a=1}^n e_a \partial_{u_a}$,
we define its Dunkl deformation by replacing the derivatives $\partial_{u_a}$
with Dunkl operators $T_{u_a}$.

\begin{definition}[Dirac--Dunkl operator]\label{def:DiracDunkl}
Let $\{u_a\}_{a=1}^n$ be an orthonormal basis of $V$, and let
$\{e_a\}_{a=1}^n$ denote the corresponding Clifford generators acting
on the spinor space $\mathcal{S}$. The Dirac--Dunkl operator is
\begin{equation}
  D_k \;=\; \sum_{a=1}^n e_a \, T_{u_a},
\end{equation}\label{DDop}
where $T_{u_a}$ are the Dunkl operators from \eqref{eq:DunklDef}.
\end{definition}

\begin{proposition}[Formal skew--adjointness]\label{prop:DkSkew}
With respect to the weighted inner product on a fundamental chamber
\[
  \langle f,g\rangle_{k}
  \;=\;\int_{C}\!\langle f(x),g(x)\rangle_{\mathcal{S}}\,
         \delta_{k}(x)\,dx ,
\]
the operator $D_{k}$ is formally skew--adjoint on
$C_{c}^{\infty}(C;\mathcal{S})$:
\[
  \langle D_{k} f, g\rangle_{k} \;=\; -\,\langle f, D_{k} g\rangle_{k}.
\]
\end{proposition}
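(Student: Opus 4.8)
The plan is to observe that the computation is structurally identical to the one establishing skew–adjointness of the flat Dirac operator $D=\sum_a e_a\partial_{u_a}$ in \eqref{eq:DiracFlat}, subject to two substitutions: the directional derivative $\partial_{u_a}$ is replaced by the Dunkl operator $T_{u_a}$, and Lebesgue measure is replaced by the weighted measure $\delta_k(x)\,dx$. The only properties of $\partial_{u_a}$ used in the flat case are its formal skew–adjointness and the fact that it commutes with the constant generator $e_a$. By Proposition~\ref{prop:ReflectionCompensator}, the operator $T_{u_a}$ is formally skew–adjoint with respect to $\langle\cdot,\cdot\rangle_k$, and it likewise commutes with $e_a$; hence the same sign computation carries over. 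Since $D_k=\sum_a e_aT_{u_a}$ is a finite sum, it suffices to treat each summand and add.

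Concretely, I would first expand, using that $D_k$ applies $T_{u_a}$ componentwise and then multiplies by $e_a$ in $\mathcal{S}$,
\[
  \langle D_kf,g\rangle_k=\sum_{a=1}^n\int_C\langle e_a\,(T_{u_a}f)(x),\,g(x)\rangle_{\mathcal{S}}\,\delta_k(x)\,dx .
\]
Fixing an orthonormal basis $\{s_j\}$ of $\mathcal{S}$ and writing $f=\sum_j f_j s_j$, the Dunkl operator acts diagonally, $T_{u_a}f=\sum_j(T_{u_a}f_j)s_j$, so the spinor inner product collapses into a finite sum of scalar weighted inner products to which Proposition~\ref{prop:ReflectionCompensator} applies in each component (the $W$–invariance of $\delta_k$ underlying the scalar reflection argument is inherited unchanged). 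I would then move the two factors across in turn: $e_a$ via its adjoint on $\mathcal{S}$, and $T_{u_a}$ via componentwise skew–adjointness. Because $e_a$ acts only on the spinor indices while $T_{u_a}$ — including its reflection part $g\mapsto g(s_\alpha\,\cdot)$ — acts only on the spatial argument and on scalar components, one has $[e_a,T_{u_a}]=0$, so that $T_{u_a}(e_a g)=e_aT_{u_a}g$ and the sum reassembles into $\pm\langle f,D_kg\rangle_k$.

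The crux — and the only place where genuine care is required — is the sign bookkeeping, which is dictated by the adjointness convention for the Clifford generators on $\mathcal{S}$. To land on skew–adjointness of $D_k$ rather than self–adjointness, the Hermitian structure on $\mathcal{S}$ must be fixed so that each generator satisfies $e_a^{*}=e_a$; this is precisely the convention under which the flat operator $D=\sum_a e_a\partial_{u_a}$ is itself formally skew–adjoint, so the two statements are consistent rather than in conflict. Under this convention the passage of $e_a$ contributes no sign while the skew–adjointness of $T_{u_a}$ contributes a single minus sign, giving $\langle D_kf,g\rangle_k=-\langle f,D_kg\rangle_k$. As in the preceding Propositions, all boundary contributions along $\partial C$ are taken to vanish for $f,g\in C_c^\infty(C;\mathcal{S})$, which is the content of the word \emph{formal}.
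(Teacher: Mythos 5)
Your route is the same as the paper's: reduce to the scalar skew--adjointness of each $T_{u_a}$ from Proposition~\ref{prop:ReflectionCompensator} (applied componentwise in a basis of $\mathcal{S}$), move $e_a$ across via its adjoint on $\mathcal{S}$, use that $e_a$ and $T_{u_a}$ commute because they act on different tensor factors, and sum over $a$. All of that matches the paper, and your explicit spinor-component decomposition is a welcome clarification.

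The genuine gap is the step you yourself flag as the crux: fixing the Hermitian structure on $\mathcal{S}$ so that $e_a^{*}=e_a$. That convention is not available here. By \eqref{cliff} one has $e_a^2=-\Id$, and no operator on a positive-definite Hermitian space can be self-adjoint and square to $-\Id$ (its eigenvalues would have to be real and satisfy $\lambda^2=-1$). The only unitary convention compatible with \eqref{cliff} is $e_a^{\dagger}=-e_a$, which is exactly what the paper adopts; plugging that into your own bookkeeping gives $(e_aT_{u_a})^{*}=T_{u_a}^{*}\,e_a^{\dagger}=(-T_{u_a})(-e_a)=+\,e_aT_{u_a}$, so $D_k$ comes out formally \emph{self}-adjoint, not skew-adjoint. (The paper's proof lands on the stated sign only through its final equality $\langle f,T_{u_a}(e_ag)\rangle_k=-\langle f,e_aT_{u_a}g\rangle_k$, which would require $e_a$ to anticommute with $T_{u_a}$; since $e_a$ is a constant endomorphism of $\mathcal{S}$ and $T_{u_a}$ acts componentwise on the spatial variable, they commute.) So you have isolated precisely the right issue, but the proposed resolution contradicts the ambient Clifford convention: to make the proposition literally true one must pass to $e_ae_b+e_be_a=+2\delta_{ab}\Id$ with Hermitian generators, at the cost of turning the square identity of Proposition~\ref{thm:DiracSquare} into $D_k^2=+\Delta_k$; with the conventions as written, the correct conclusion is formal self-adjointness of $D_k$.
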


\begin{proof}
By Proposition~\ref{prop:ReflectionCompensator}, each $T_{u_{a}}$ is formally skew--adjoint on
$C_{c}^{\infty}(C)$ with respect to $\langle\cdot,\cdot\rangle_{k}$; the Clifford
generators $e_{a}$ act unitarily on $\mathcal{S}$ with
$e_{a}^{\dagger}=-\,e_{a}$, so that is formally skew--adjoint
wrt to the weighted  space associated to the Heckman–Opdam measure \eqref{cliff}. Thus
\[
  \langle e_{a}T_{u_{a}} f, g\rangle_{k}
  \;=\; \langle T_{u_{a}} f, e_{a}^{\dagger} g\rangle_{k}
  \;=\; -\,\langle T_{u_{a}} f, e_{a} g\rangle_{k}
  \;=\; \langle f, T_{u_{a}}(e_{a} g)\rangle_{k}
  \;=\; -\,\langle f, e_{a} T_{u_{a}} g\rangle_{k}.
\]
Summing over $a$ yields the claim.
\end{proof}

\begin{proposition}[Dirac square identity]\label{thm:DiracSquare}
For the Dirac--Dunkl operator \eqref{def:DiracDunkl} one has the Clifford square identity
\begin{equation}\label{eq:DkSquare}
  D_{k}^{2} \;=\; -\,\Delta_{k},\qquad
  \Delta_{k} \;=\; \sum_{a=1}^{n} T_{u_{a}}^{2}.
\end{equation}
\end{proposition}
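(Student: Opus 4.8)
The plan is to expand $D_k^2$ directly from Definition~\ref{def:DiracDunkl} and collapse it to $-\Delta_k$ using two structural inputs: the Clifford anticommutation relation \eqref{cliff} and the commutativity \eqref{eq:commute} of the Dunkl family. Writing out the square gives
\[
  D_k^2 = \Big(\sum_a e_a T_{u_a}\Big)\Big(\sum_b e_b T_{u_b}\Big)
        = \sum_{a,b} e_a\, T_{u_a}\, e_b\, T_{u_b}.
\]
The first task is to slide the constant Clifford generator $e_b$ past $T_{u_a}$. Since each $e_b \in \Cl(V)$ acts only on the spinor module $\mathcal{S}$ and is independent of $x$, while $T_{u_a}$ acts componentwise on $\mathcal{S}$--valued functions, differentiating and reflecting the spatial argument without touching spinor indices as recorded in \eqref{DiracDunklop}, the two operators commute, $T_{u_a} e_b = e_b T_{u_a}$. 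In particular this holds for the reflection part $f(x)-f(s_\alpha x)$, because $s_\alpha$ reflects the argument $x$ and leaves the value in $\mathcal{S}$ fixed. Hence $D_k^2 = \sum_{a,b} e_a e_b\, T_{u_a} T_{u_b}$.

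Next I would split the Clifford product into symmetric and antisymmetric parts,
\[
  e_a e_b = \tfrac12(e_a e_b + e_b e_a) + \tfrac12[e_a,e_b]
          = -\delta_{ab}\,\Id + \tfrac12[e_a,e_b],
\]
using \eqref{cliff}. Substituting, the symmetric part immediately yields $\sum_{a,b}(-\delta_{ab})\,T_{u_a} T_{u_b} = -\sum_a T_{u_a}^2 = -\Delta_k$, exactly the claimed right-hand side. It then remains to show that the antisymmetric contribution vanishes. Pairing the indices $(a,b)$ and $(b,a)$ and using the antisymmetry of $[e_a,e_b]$ together with the identity $e_a e_a=-\Id$ on the diagonal, one finds
\[
  \tfrac12\sum_{a,b}[e_a,e_b]\,T_{u_a}T_{u_b}
  = \tfrac12\sum_{a<b}[e_a,e_b]\,[T_{u_a},T_{u_b}] = 0,
\]
by the commutativity \eqref{eq:commute} of the Dunkl family. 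Collecting both parts gives $D_k^2 = -\Delta_k$.

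The conceptual heart of the identity, and the place where care is essential, is the vanishing of this antisymmetric term, which rests entirely on Dunkl's commutativity theorem \eqref{eq:commute}. Without it the commutators $[T_{u_a},T_{u_b}]$ would survive and $D_k^2$ would acquire a Clifford--bivector--valued curvature obstruction to the clean square identity; the construction is consistent precisely because $\{T_{u_a}\}$ is a commuting family. A secondary point requiring verification is the commutation of $e_b$ with $T_{u_a}$ in the first step: one must confirm that the reflections $s_\alpha$ act only on the spatial variable and not on the spinor factor, so that no spin lift of $s_\alpha$ intervenes. Granting these two inputs, the remaining manipulation is formal bookkeeping of the index sums.
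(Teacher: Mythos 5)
Your proof is correct and follows essentially the same route as the paper: split $e_ae_b\,T_{u_a}T_{u_b}$ into symmetric and antisymmetric parts, use the Clifford relation \eqref{cliff} to collapse the symmetric part to $-\Delta_k$, and kill the antisymmetric part via the commutativity \eqref{eq:commute}. Your extra step justifying that the constant Clifford generators commute with the Dunkl operators (since $s_\alpha$ acts only on the spatial argument) is a point the paper leaves implicit, but it is the same argument.
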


\begin{proof}
Expand using \eqref{def:DiracDunkl}:
\[
  D_{k}^{2}
  \;=\; \sum_{a,b} e_{a} e_{b}\, T_{u_{a}} T_{u_{b}}
  \;=\; \tfrac{1}{2}\sum_{a,b} (e_{a} e_{b}+e_{b} e_{a})\, T_{u_{a}} T_{u_{b}}
        \;+\; \tfrac{1}{2}\sum_{a,b} (e_{a} e_{b}-e_{b} e_{a})\, T_{u_{a}} T_{u_{b}}.
\]
The commutator part vanishes because $[T_{u_{a}},T_{u_{b}}]=0$; hence only the
anti-commutator of the Clifford generators remains. Using
$e_{a}e_{b}+e_{b}e_{a}=-2\delta_{ab} I$ gives
\[
  D_{k}^{2} \;=\; -\sum_{a} T_{u_{a}}^{2} \;=\; -\,\Delta_{k}.
\]
\end{proof}

\begin{corollary}
In the flat case $k\equiv 0$ one recovers $D^{2}=-\Delta$, the usual Dirac square identity.
\end{corollary}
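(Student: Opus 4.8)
The plan is to recognize this corollary as a direct specialization of Proposition~\ref{thm:DiracSquare} rather than as an independent result, so the whole argument reduces to tracking how each operator degenerates when the multiplicity function is switched off. First I would return to the Dunkl operator in \eqref{eq:DunklDef} and observe that every reflection-and-drift correction term carries an explicit prefactor $k(\alpha)$. Consequently, when $k \equiv 0$ the entire sum over $R^{+}$ annihilates, and one is left with $T_{\xi} = \partial_{\xi}$ for every $\xi \in V$; in particular $T_{u_{a}} = \partial_{u_{a}}$ for each basis vector $u_{a}$.

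Second, I would substitute this degeneration into the two objects appearing in \eqref{eq:DkSquare}. The Dirac--Dunkl operator $D_{k} = \sum_{a} e_{a} T_{u_{a}}$ collapses termwise to the flat Dirac operator $D = \sum_{a} e_{a}\,\partial_{u_{a}}$ of \eqref{eq:DiracFlat}, while the Dunkl Laplacian $\Delta_{k} = \sum_{a} T_{u_{a}}^{2}$ collapses to the ordinary Laplacian $\Delta = \sum_{a} \partial_{u_{a}}^{2}$. Since Proposition~\ref{thm:DiracSquare} asserts $D_{k}^{2} = -\Delta_{k}$ for an \emph{arbitrary} multiplicity function, evaluating this identity at $k \equiv 0$ and using the two collapses immediately gives $D^{2} = -\Delta$, which is the classical Dirac square identity.

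There is no obstacle of genuine substance here, since the statement is a clean limiting case. The single point worth flagging is that the proof of Proposition~\ref{thm:DiracSquare} invoked the commutativity $[T_{u_{a}}, T_{u_{b}}] = 0$; at $k \equiv 0$ this hypothesis reduces to the trivial flatness relation $[\partial_{u_{a}}, \partial_{u_{b}}] = 0$ among the partial derivatives, so the argument not only remains valid but becomes elementary. Hence no fresh computation is required beyond confirming the termwise degeneration, and one may present the corollary in a single line as the $k \to 0$ restriction of \eqref{eq:DkSquare}.
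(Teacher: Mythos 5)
Your proposal is correct and follows exactly the route the paper intends: the corollary is left unproved there precisely because it is the immediate specialization of Proposition~\ref{thm:DiracSquare} at $k\equiv 0$, where every correction term in \eqref{eq:DunklDef} carries a factor $k(\alpha)$ and hence vanishes, so $T_{u_a}=\partial_{u_a}$, $D_k=D$, and $\Delta_k=\Delta$. Your additional remark that the commutativity hypothesis degenerates to $[\partial_{u_a},\partial_{u_b}]=0$ is a sound sanity check but adds nothing beyond the paper's implicit argument.
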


\subsection{Global $W$--equivariant formulation}

In the previous section, the integration over $V$ was used as a
{technical device} to eliminate boundary terms arising from
integration by parts inside the chamber $C$. In that setting, fields
were originally defined on $C$, and extended to $V$ only for the sake
of the adjointness calculation.

We now pass to a genuinely global formulation, where fields are defined
on all of $V$ from the start, subject to a $W$--equivariance rule.
Let $\rho: W \to U(\mathcal{S})$ be a finite--dimensional unitary
representation, commuting with the Clifford action \cite{Lounesto2001}. A
$\mathcal{S}$--valued field $\Phi$ on $C$ is extended to all of $V$ by
the glue rule
\begin{equation}\label{eq:GlueRuleGlobal}
  \Phi(w x) \;=\; \rho(w)\,\Phi(x),
  \qquad x\in C,\; w\in W.
\end{equation}
This provides a global $W$--equivariant field on $V$, consistent across
the chamber walls. With this convention, the Dunkl operators acting on $\mathcal{S}$--valued
fields take the global difference form
\begin{equation}\label{eq:DunklRho}
  T_{\xi} \Phi(x) \;=\; \partial_{\xi}\Phi(x)
  \;+\; \sum_{\alpha\in R^{+}} k(\alpha)\,
        \frac{\langle \alpha,\xi\rangle}{\langle \alpha,x\rangle}
        \Big(\Phi(x)-\rho(s_{\alpha})\,\Phi(s_{\alpha}x)\Big),
\end{equation}
and the Dirac--Dunkl operator is given by
\begin{equation}\label{eq:DiracDunklGlobal}
  D_{k} \;=\; \sum_{a=1}^{n} e_{a}\, T_{u_{a}}.
\end{equation}

\begin{remark}
In the scalar case $\rho=\mathbf{1}$ the formula
\eqref{eq:DunklRho} reduces to the reflection-compensated Dunkl operator
\eqref{eq:DunklDef}. For nontrivial $\rho$, the reflection terms act by
$\rho(s_\alpha)$ on the internal spin space $\mathcal{S}$, twisting the
operator in a representation--theoretic manner \cite{VazRocha2016}.
\end{remark}

\section{Dirac-Dunkl Operators with Nontrivial Representations}

To understand how representation dependence modifies the structure of Dunkl and Dirac--Dunkl operators, we begin with the simplest Coxeter systems, namely of type $A_1$ and $A_2$.  These low-rank examples make explicit the interplay between the reflection group $W$, its unitary representations $\rho$, and the corresponding Dunkl operators.  Starting from these cases allows us to track how the reflection terms, differential parts, and rational factors evolve when the representation changes—from the trivial and sign representations to higher-dimensional irreducible ones.  The resulting operators exhibit different symmetry properties and lead to distinct rational (Calogero-type) reductions.  Once these basic mechanisms are clear, the generalization to arbitrary type $A_n$ follows naturally.

For simplicity, let us with the one dimensional case. In this case, $V=\mathbb{R}$, the reflection group is $W=\mathbb{Z}_2=\{1,s\}$,
with reflection $s:x\mapsto -x$, and root system $R^+=\{1\}$.
The weight is $\delta_k(x)=|x|^{2k}$.
The Dunkl operator with representation $\rho$ is
\[
  T f(x) \;=\; \frac{d}{dx} f(x)
  \;+\; k\,\frac{f(x)-\rho(s)\,f(-x)}{x}.
\]

\begin{itemize}
\item[(i)] Trivial representation $\rho=\mathbf{1}$.
In this case,
\[
  T f(x) \;=\; f'(x) + k\,\frac{f(x)-f(-x)}{x},
\]
the classical one-dimensional Dunkl operator. The Dirac--Dunkl operator is
\[
  D_k f(x) = e_1\,T f(x),
\]
with $D_k^2 = -T^2$.

\item[(ii)] Sign representation $\rho(s)=-1$.
Then
\[
  T f(x) \;=\; f'(x) + k\,\frac{f(x)-(-f(-x))}{x}
  \;=\; f'(x) + k\,\frac{f(x)+f(-x)}{x}.
\]
Here the reflection term flips sign, producing a genuinely different operator.
Again $D_k = e_1 T$, but with this twisted form of $T$.
\end{itemize}
\noindent
Let us picture now $A_2$ so we can generalize afterwards to the case $A_n$.
Now $V=\{(x_1,x_2,x_3)\in\mathbb{R}^3 \;|\; x_1+x_2+x_3=0\}\cong \mathbb{R}^2$,
with $W=S_3$ acting by permutation of coordinates.
The root system has three positive roots, e.g.
\[
  R^+ = \{e_1-e_2, \; e_2-e_3, \; e_1-e_3\},
\]
and reflections across the hyperplanes
$\{x_i=x_j\}$. For $\xi\in V$, the Dunkl operator with representation $\rho$ is
\[
  T_\xi \Phi(x)
  = \partial_\xi \Phi(x)
  + \sum_{\alpha\in R^+} k(\alpha)\,
    \frac{\langle \alpha,\xi\rangle}{\langle \alpha,x\rangle}
    \big(\Phi(x)-\rho(s_\alpha)\,\Phi(s_\alpha x)\big).
\]

\begin{itemize}
\item[(i)] Trivial representation $\rho=\mathbf{1}$.

In this case, $\rho(s_\alpha) = \Id$ for all reflections $s_\alpha \in W$, so the
operator acts on scalar functions $\Phi : V \to \mathbb{C}$. The reflection term
simplifies to the scalar difference
\[
  \Phi(x) - \Phi(s_\alpha x),
\]
measuring the jump of $\Phi$ across the reflecting hyperplane
$\langle \alpha, x \rangle = 0$. Hence, the Dunkl operator becomes
\begin{equation}\label{DDa2gen}
  T_\xi \Phi(x)
  = \partial_\xi \Phi(x)
  + \sum_{\alpha\in R^+} k(\alpha)\,
    \frac{\langle \alpha,\xi\rangle}{\langle \alpha,x\rangle}
    \big(\Phi(x)-\Phi(s_\alpha x)\big),
\end{equation}
which is precisely the classical Dunkl operator of type $A_2$ acting on scalar
functions.

Geometrically, $A_2$ corresponds to reflections across the three lines
$x_1=x_2$, $x_2=x_3$, and $x_1=x_3$ in the plane $x_1+x_2+x_3=0$, forming $6$
Weyl chambers separated by $60^\circ$ angles. The scalar Dunkl operator therefore
encodes infinitesimal translations within one chamber, together with discrete
reflections across the walls. The parameter $k(\alpha)$ controls the strength of
these reflections: when $k=0$, $T_\xi$ reduces to the standard directional derivative,
while for $k\neq 0$ it introduces a nonlocal correction capturing the influence
of neighboring chambers through $\Phi(s_\alpha x)$.
To rewrite \eqref{DDa2gen} in $\mathbb{R}^2,$ we consider the ambient space:
\begin{equation}\label{Va2}
  V = \{(x_1,x_2,x_3)\in\mathbb{R}^3 \;|\; x_1+x_2+x_3=0\}
  \;\cong\; \mathbb{R}^2,
\end{equation}
and the reflection group $W=S_3$ acts by permuting the coordinates $(x_1,x_2,x_3)$.

The three positive roots may be chosen as
\[
  \alpha_1 = e_1 - e_2, \qquad
  \alpha_2 = e_2 - e_3, \qquad
  \alpha_3 = e_1 - e_3,
\]
corresponding to reflections across the planes
$x_1 = x_2$, $x_2 = x_3$, and $x_1 = x_3$ respectively.  The associated reflections act by
\[
  s_{\alpha_1}(x_1,x_2,x_3) = (x_2,x_1,x_3), \qquad
  s_{\alpha_2}(x_1,x_2,x_3) = (x_1,x_3,x_2), \qquad
  s_{\alpha_3}(x_1,x_2,x_3) = (x_3,x_2,x_1).
\]

Therefore, the Dunkl operator along a direction $\xi=(\xi_1,\xi_2,\xi_3)\in V$ acts as
\[
  (T_\xi \Phi)(x_1,x_2,x_3)
  = \sum_{i=1}^3 \xi_i\,\partial_{x_i}\Phi(x_1,x_2,x_3)
  + \sum_{\alpha\in R^+} k(\alpha)\,
    \frac{\langle \alpha,\xi\rangle}{\langle \alpha,x\rangle}
    \Big(\Phi(x_1,x_2,x_3)-\Phi(s_\alpha(x_1,x_2,x_3))\Big),
\]
where $(x_1,x_2,x_3)$ fulfill the conditions \eqref{Va2}. Writing out the scalar products explicitly, we have
\[
\begin{aligned}
  \langle \alpha_1,\xi\rangle &= \xi_1-\xi_2, & \quad
  \langle \alpha_1,x\rangle &= x_1-x_2,\\
  \langle \alpha_2,\xi\rangle &= \xi_2-\xi_3, & \quad
  \langle \alpha_2,x\rangle &= x_2-x_3,\\
  \langle \alpha_3,\xi\rangle &= \xi_1-\xi_3, & \quad
  \langle \alpha_3,x\rangle &= x_1-x_3.
\end{aligned}
\]

Hence, the Dunkl operator in coordinates becomes
\[
\begin{aligned}
  (T_\xi \Phi) & (x_1,x_2,x_3)
  = \xi_1\,\partial_{x_1}\Phi
   + \xi_2\,\partial_{x_2}\Phi
   + \xi_3\,\partial_{x_3}\Phi \\
  & + k\Bigg[
     \frac{(\xi_1-\xi_2)\big(\Phi(x_1,x_2,x_3)-\Phi(x_2,x_1,x_3)\big)}{x_1-x_2}
     + \frac{(\xi_2-\xi_3)\big(\Phi(x_1,x_2,x_3)-\Phi(x_1,x_3,x_2)\big)}{x_2-x_3}\\
  &\qquad\quad \qquad\quad
     + \frac{(\xi_1-\xi_3)\big(\Phi(x_1,x_2,x_3)-\Phi(x_3,x_2,x_1)\big)}{x_1-x_3}
   \Bigg].
\end{aligned}
\]

This is the explicit classical Dunkl operator of type $A_2$ in the scalar case.  
Geometrically, the three hyperplanes $x_i = x_j$ divide the plane
$x_1+x_2+x_3=0$ into six $60^\circ$ chambers. The Dunkl term introduces
nonlocal reflection corrections coupling adjacent chambers, with coupling
strength $k(\alpha)$. For $k=0$ one recovers the flat derivative in direction $\xi$.

\item[(ii)] \text{Sign representation $\rho = \mathrm{sgn}$ of $S_3$.}

Here each reflection acts as multiplication by $-1$, so that
$\rho(s_\alpha)\Phi(s_\alpha x) = -\,\Phi(s_\alpha x)$. The Dunkl operator becomes
\[
  T_\xi \Phi(x)
  = \partial_\xi \Phi(x)
  + \sum_{\alpha\in R^+} k(\alpha)
     \frac{\langle \alpha,\xi\rangle}{\langle \alpha,x\rangle}
     \big(\Phi(x) + \Phi(s_\alpha x)\big), \qquad x\in V.
\]
or explicitly, in coordinates:
\[
\begin{aligned}
T_\xi \Phi(x_1,x_2,x_3)
  &= \sum_{i=1}^3 \xi_i\,\partial_{x_i}\Phi(x_1,x_2,x_3) \\
  &\quad+ k\Bigg[
     \frac{(\xi_1-\xi_2)\big(\Phi(x)+\Phi(x_2,x_1,x_3)\big)}{x_1-x_2}
     + \frac{(\xi_2-\xi_3)\big(\Phi(x)+\Phi(x_1,x_3,x_2)\big)}{x_2-x_3}\\
  &\qquad\quad
     + \frac{(\xi_1-\xi_3)\big(\Phi(x)+\Phi(x_3,x_2,x_1)\big)}{x_1-x_3}
     \Bigg].
\end{aligned}
\]

Recall that $x\in V$ defined in \eqref{Va2}, as well as $(x_1,x_2,x_3)\in V$, i.e., $x_3=-x_1-x_2$.
This differs from the scalar (trivial) case by a sign flip in the reflection term,
coupling even and odd parts of $\Phi$. In this antisymmetric (sign) sector, the
square of the Dirac--Dunkl operator gives rise to a Calogero--Moser type Hamiltonian
with inverse--square wall potentials.

\item[(iii)] Two-dimensional irreducible representation of $S_3$.
Here $\Phi:V\to\mathbb{C}^2$ with $\Phi(x)=\begin{pmatrix}\phi_1(x) \\ \phi_2(x)\end{pmatrix}$,
and the reflections act by $2\times 2$ orthogonal matrices:
\[
\rho(s_{12})=\begin{pmatrix}1&0\\0&-1\end{pmatrix},\quad
\rho(s_{23})=\begin{pmatrix}-\tfrac{1}{2}&\tfrac{\sqrt{3}}{2}\\[6pt]\tfrac{\sqrt{3}}{2}&\tfrac{1}{2}\end{pmatrix},\quad
\rho(s_{13})=\begin{pmatrix}-\tfrac{1}{2}&-\tfrac{\sqrt{3}}{2}\\[6pt]-\tfrac{\sqrt{3}}{2}&\tfrac{1}{2}\end{pmatrix}.
\]

Thus the Dunkl operator is
\[
\begin{aligned}
T_\xi \Phi(x) &= \partial_\xi \Phi(x) \\[6pt]
&\quad+ k(\alpha_{12}) \frac{\langle \alpha_{12},\xi\rangle}{\langle \alpha_{12},x\rangle}
\left(
\begin{pmatrix}\phi_1(x) \\ \phi_2(x)\end{pmatrix}
-
\begin{pmatrix}1&0\\0&-1\end{pmatrix}
\begin{pmatrix}\phi_1(x_2,x_1,x_3) \\ \phi_2(x_2,x_1,x_3)\end{pmatrix}
\right) \\[12pt]
&\quad+ k(\alpha_{23}) \frac{\langle \alpha_{23},\xi\rangle}{\langle \alpha_{23},x\rangle}
\left(
\begin{pmatrix}\phi_1(x) \\ \phi_2(x)\end{pmatrix}
-
\begin{pmatrix}-\tfrac{1}{2}&\tfrac{\sqrt{3}}{2}\\[6pt]\tfrac{\sqrt{3}}{2}&\tfrac{1}{2}\end{pmatrix}
\begin{pmatrix}\phi_1(x_1,x_3,x_2) \\ \phi_2(x_1,x_3,x_2)\end{pmatrix}
\right) \\[12pt]
&\quad+ k(\alpha_{13}) \frac{\langle \alpha_{13},\xi\rangle}{\langle \alpha_{13},x\rangle}
\left(
\begin{pmatrix}\phi_1(x) \\ \phi_2(x)\end{pmatrix}
-
\begin{pmatrix}-\tfrac{1}{2}&-\tfrac{\sqrt{3}}{2}\\[6pt]-\tfrac{\sqrt{3}}{2}&\tfrac{1}{2}\end{pmatrix}
\begin{pmatrix}\phi_1(x_3,x_2,x_1) \\ \phi_2(x_3,x_2,x_1)\end{pmatrix}
\right), 
\end{aligned}
\]
such that $x\in V, (x_1,x_2,x_3)\in V.$
Explicitly, each reflection term gives:
\[
\begin{aligned}
\Phi(x) - \rho(s_{12})\Phi(s_{12}x)
&=
\begin{pmatrix}
\phi_1(x_1,x_2,x_3) - \phi_1(x_2,x_1,x_3) \\[4pt]
\phi_2(x) + \phi_2(x_2,x_1,x_3)
\end{pmatrix}, \\[8pt]
\Phi(x) - \rho(s_{23})\Phi(s_{23}x)
&=
\begin{pmatrix}
\phi_1(x) + \tfrac{1}{2}\phi_1(x_1,x_3,x_2) - \tfrac{\sqrt{3}}{2}\phi_2(x_1,x_3,x_2) \\[6pt]
\phi_2(x) - \tfrac{\sqrt{3}}{2}\phi_1(x_1,x_3,x_2) - \tfrac{1}{2}\phi_2(x_1,x_3,x_2)
\end{pmatrix}, \\[8pt]
\Phi(x) - \rho(s_{13})\Phi(s_{13}x)
&=
\begin{pmatrix}
\phi_1(x) + \tfrac{1}{2}\phi_1(x_3,x_2,x_1) + \tfrac{\sqrt{3}}{2}\phi_2(x_3,x_2,x_1) \\[6pt]
\phi_2(x) + \tfrac{\sqrt{3}}{2}\phi_1(x_3,x_2,x_1) - \tfrac{1}{2}\phi_2(x_3,x_2,x_1)
\end{pmatrix}.
\end{aligned}
\]

Hence $T_\xi$ acts on $\Phi:V\to\mathbb{C}^2$ by combining derivatives with explicit
linear combinations of reflected values of the components, with coefficients given by the
reflection matrices of the $2$-dimensional irrep of $S_3$.
\end{itemize}´

\subsection{Explicit Calogero--type realizations for type $A_2$.}
Let $\xi = e_1 - e_2 = \alpha_{12}$, so that
$\partial_\xi = \partial_{x_1} - \partial_{x_2}$.
The positive roots are
\[
\alpha_{12}=e_1-e_2,\qquad
\alpha_{23}=e_2-e_3,\qquad
\alpha_{13}=e_1-e_3,
\]
with
\[
\langle \alpha_{12},\xi\rangle=2,\qquad
\langle \alpha_{23},\xi\rangle=-1,\qquad
\langle \alpha_{13},\xi\rangle=1,
\]
and for any $x=(x_1,x_2,x_3)$,
\[
\langle \alpha_{12},x\rangle=x_1-x_2,\qquad
\langle \alpha_{23},x\rangle=x_2-x_3,\qquad
\langle \alpha_{13},x\rangle=x_1-x_3.
\]
Then the general Dunkl operator is
\[
\begin{aligned}
T_{e_1-e_2}\Phi(x)
&= \partial_{x_1}\Phi(x_1,x_2,x_3)-\partial_{x_2}\Phi(x_1,x_2,x_3) \\[3pt]
&\quad+ k(\alpha_{12})\,\frac{2}{x_1-x_2}
\big(\Phi(x)-\rho(s_{12})\Phi(x_2,x_1,x_3)\big) \\[3pt]
&\quad- k(\alpha_{23})\,\frac{1}{x_2-x_3}
\big(\Phi(x)-\rho(s_{23})\Phi(x_1,x_3,x_2)\big) \\[3pt]
&\quad+ k(\alpha_{13})\,\frac{1}{x_1-x_3}
\big(\Phi(x)-\rho(s_{13})\Phi(x_3,x_2,x_1)\big).
\end{aligned}
\]

\medskip

\noindent\textbf{(i) Trivial representation \(\rho=\mathbf{1}\).}\\[2pt]
Here $\rho(s_{ij})=1$, so all reflections act trivially and
\begin{align}
T_{e_1-e_2}\Phi(x_1,x_2,x_3)
&= (\partial_{x_1}-\partial_{x_2})\Phi(x_1,x_2,x_3)+ 2k_{12}\frac{\Phi(x_1,x_2,x_3)-\Phi(x_2,x_1,x_3)}{x_1-x_2} \nonumber \\
&\quad \quad  - k_{23}\frac{\Phi(x_1,x_2,x_3)-\Phi(x_1,x_3,x_2)}{x_2-x_3}+ k_{13}\frac{\Phi(x_1,x_2,x_3)-\Phi(x_3,x_2,x_1)}{x_1-x_3}.
\end{align}
This reproduces the usual scalar Dunkl operator of type $A_2$, i.e.
the standard Calogero--Moser differential--difference form.

\noindent\textbf{(ii) Sign representation \(\rho=\mathrm{sgn}\).}\\[2pt]
Now $\rho(s_{ij})=-1$, so each reflection contributes with a plus sign:
\begin{align}
T_{e_1-e_2}\Phi(x_1,x_2,x_3)
&= (\partial_{x_1}-\partial_{x_2})\Phi(x_1,x_2,x_3)
+ 2k_{12}\frac{\Phi(x_1,x_2,x_3)+\Phi(x_2,x_1,x_3)}{x_1-x_2} \nonumber\\
&- k_{23}\frac{\Phi(x_1,x_2,x_3)+\Phi(x_1,x_3,x_2)}{x_2-x_3}
+ k_{13}\frac{\Phi(x_1,x_2,x_3)+\Phi(x_3,x_2,x_1)}{x_1-x_3}.
\end{align}
This gives the odd (antisymmetric) Calogero--type operator.

\medskip

\noindent\textbf{(iii) Two-dimensional irreducible representation.}\\[2pt]
Here $\Phi(x)=\begin{pmatrix}\phi_1(x)\\ \phi_2(x)\end{pmatrix}$ and
\[
\rho(s_{12})=\begin{pmatrix}1&0\\0&-1\end{pmatrix},\qquad
\rho(s_{23})=\begin{pmatrix}-\tfrac{1}{2}&\tfrac{\sqrt{3}}{2}\\[4pt]\tfrac{\sqrt{3}}{2}&\tfrac{1}{2}\end{pmatrix},\qquad
\rho(s_{13})=\begin{pmatrix}-\tfrac{1}{2}&-\tfrac{\sqrt{3}}{2}\\[4pt]-\tfrac{\sqrt{3}}{2}&\tfrac{1}{2}\end{pmatrix}.
\]
Then
\[
\begin{aligned}
T_{e_1-e_2}&
\begin{pmatrix}\phi_1\\ \phi_2\end{pmatrix}
=(\partial_{x_1}-\partial_{x_2})
\begin{pmatrix}\phi_1\\ \phi_2\end{pmatrix}+ \frac{2k_{12}}{x_1-x_2}
\begin{pmatrix}
\phi_1(x_1,x_2,x_3)-\phi_1(x_2,x_1,x_3)\\[4pt]
\phi_2(x_1,x_2,x_3)+\phi_2(x_2,x_1,x_3)
\end{pmatrix} \\[5pt]
&-\frac{k_{23}}{x_2-x_3}
\begin{pmatrix}
\phi_1(x_1,x_2,x_3)+\tfrac{1}{2}\phi_1(x_1,x_3,x_2)-\tfrac{\sqrt{3}}{2}\phi_2(x_1,x_3,x_2)\\[4pt]
\phi_2(x_1,x_2,x_3)-\tfrac{\sqrt{3}}{2}\phi_1(x_1,x_3,x_2)-\tfrac{1}{2}\phi_2(x_1,x_3,x_2)
\end{pmatrix} \\[5pt]
&\quad \quad +\frac{k_{13}}{x_1-x_3}
\begin{pmatrix}
\phi_1(x_1,x_2,x_3)+\tfrac{1}{2}\phi_1(x_3,x_2,x_1)+\tfrac{\sqrt{3}}{2}\phi_2(x_3,x_2,x_1)\\[4pt]
\phi_2(x_1,x_2,x_3)+\tfrac{\sqrt{3}}{2}\phi_1(x_3,x_2,x_1)-\tfrac{1}{2}\phi_2(x_3,x_2,x_1)
\end{pmatrix}.
\end{aligned}
\]
The denominators $(x_i-x_j)^{-1}$ measure the distance to each reflection
hyperplane $x_i=x_j$, while the matrices $\rho(s_{ij})$ implement the internal
symmetry of the two-dimensional $S_3$ representation.  
In the scalar cases (i) and (ii), these matrices reduce to $\pm 1$, recovering
respectively the even and odd Calogero--Moser operators.

\subsection{Global formulation of $A_n$ Dirac-Dunkl operators.}
The global formulation \eqref{eq:DunklRho} extends naturally to arbitrary rank.
For the Coxeter system of type $A_n$, the reflection group is
\[
  W = S_{n+1},
\]
acting by permutation of coordinates on
\[
  V = \left\{ x = (x_1,\dots,x_{n+1}) \in \mathbb{R}^{n+1} \;\middle|\; 
  \sum_{i=1}^{n+1} x_i = 0 \right\} \;\cong\; \mathbb{R}^n.
\]
The root system is
\[
  R = \{ e_i - e_j \;|\; i \ne j \}, \qquad
  R^+ = \{ e_i - e_j \;|\; i < j \},
\]
with reflections $s_{ij}$ exchanging $x_i \leftrightarrow x_j$.
For a unitary representation $\rho: S_{n+1} \to U(d)$ and a multiplicity function
$k: R \to \mathbb{C}$ satisfying $k(\alpha_{ij})=k_{ij}=k_{ji}$, the
\emph{representation-dependent Dunkl operator} reads
\[
  T_\xi \Phi(x)
  = \partial_\xi \Phi(x)
  + \sum_{1\le i<j\le n+1} 
    k_{ij}\,\frac{\langle e_i - e_j, \xi\rangle}{x_i - x_j}\,
    \big(\Phi(x) - \rho(s_{ij})\,\Phi(s_{ij}x)\big),
\]
where $\Phi:V\to\mathbb{C}^d$ is a $d$-component function transforming under $\rho$.
The associated Dirac--Dunkl operator is
\[
  D_k = \sum_{r=1}^n e_r\,T_{e_r},
\]
with Clifford generators $e_r$ satisfying $e_r e_s + e_s e_r = -2\delta_{rs}I$.

\begin{itemize}
\item[(i)] \textbf{Trivial representation $\rho=\mathbf{1}$.}
Here all reflections act trivially, and
\[
  T_\xi f(x)
  = \partial_\xi f(x)
  + \sum_{i<j} k_{ij}\,\frac{\langle e_i - e_j, \xi\rangle}{x_i - x_j}
  \big(f(x) - f(s_{ij}x)\big),
\]
which recovers the classical scalar Dunkl operator of type $A_n$.
The Dirac--Dunkl operator reduces to the standard one of
Dunkl--Clifford analysis.

\item[(ii)] \textbf{Sign representation $\rho=\mathrm{sgn}$.}
In this case $\rho(s_{ij})=-1$ for all reflections,
so the difference term changes to a sum:
\[
  T_\xi f(x)
  = \partial_\xi f(x)
  + \sum_{i<j} k_{ij}\,\frac{\langle e_i - e_j, \xi\rangle}{x_i - x_j}
  \big(f(x) + f(s_{ij}x)\big).
\]
This operator acts naturally on antisymmetric functions and corresponds
to the “odd” or fermionic Calogero--Moser realization.

\item[(iii)] \textbf{Higher-dimensional representations.}
For any irreducible representation $\rho$ of $S_{n+1}$ with dimension $d>1$,
the Dunkl operator becomes \emph{matrix-valued}, acting on
$\Phi:V\to\mathbb{C}^d$.  Each reflection term couples the components according
to the reflection matrices $\rho(s_{ij})$:
\[
  T_\xi \Phi(x)
  = \partial_\xi \Phi(x)
  + \sum_{i<j} k_{ij}\,\frac{\langle e_i - e_j, \xi\rangle}{x_i - x_j}
    \big(\Phi(x) - \rho(s_{ij})\,\Phi(s_{ij}x)\big).
\]
The denominators $(x_i - x_j)^{-1}$ measure the distance to each reflection
hyperplane, while the matrices $\rho(s_{ij})$ encode internal symmetries of
the chosen representation.
\end{itemize}

\subsubsection{Explicit Calogero--type realization for type $A_n$.}
For $\xi = e_p - e_q$, the differential part is
$\partial_\xi = \partial_{x_p} - \partial_{x_q}$, and the operator takes the explicit form
\[
\begin{aligned}
T_{e_p-e_q}\Phi(x)
&= (\partial_{x_p} - \partial_{x_q})\,\Phi(x) \\[3pt]
&\quad+ \sum_{i<j} k_{ij}\,
  \frac{\langle e_i - e_j, e_p - e_q\rangle}{x_i - x_j}\,
  \big(\Phi(x) - \rho(s_{ij})\,\Phi(s_{ij}x)\big).
\end{aligned}
\]
For the scalar cases $\rho=\mathbf{1}$ or $\rho=\mathrm{sgn}$, this yields respectively
the even and odd Calogero--Moser operators. For higher-dimensional $\rho$, one obtains
multi-component Calogero models with internal $S_{n+1}$ symmetry.

\section{Conclusions}

The Dunkl operator $T_\xi$ is a deformation of the directional derivative
$\partial_\xi$ that incorporates reflection terms:
\[
  T_\xi^\rho f(x)
  = \partial_\xi f(x)
  + \sum_{\alpha\in R^+} k(\alpha)\,
    \frac{\langle \alpha,\xi\rangle}{\langle \alpha,x\rangle}\,
    \big(f(x) - \rho(s_\alpha)\,f(s_\alpha x)\big).
\]
It acts on $\mathbb{C}^d$-valued functions whenever $\rho$ has dimension $d$.
The corresponding Dirac--Dunkl operator combines these deformed derivatives
with Clifford generators $e_i$:
\[
  D_k = \sum_{i=1}^n e_i\,T_i,
  \qquad
  e_i e_j + e_j e_i = -2\delta_{ij} I,
\]
and satisfies the key relation
\[
  D_k^2 = -\sum_{i=1}^n T_i^2 = -\Delta_k,
\]
so that the Dirac--Dunkl operator provides a Clifford-linear square root
of the Dunkl Laplacian $\Delta_k$.  In the scalar case this recovers the usual
Calogero--Moser Hamiltonian, while for nontrivial $\rho$ one obtains
matrix-valued extensions.

The representation $\rho$ determines how the reflections act on the internal
degrees of freedom.  Depending on the choice of $\rho$, one obtains very
different analytic and physical structures, as summarized below:

\begin{itemize}
\item The Dunkl and Dirac--Dunkl operators admit a consistent
representation-dependent generalization valid for any Coxeter group, in
particular for all $A_n$ systems.

\item The {Calogero--Moser structure} appears only for scalar
representations of $W$, namely the trivial and sign representations.
In these cases, the Dunkl Laplacian $\Delta_k$ yields the
Calogero--Moser Hamiltonian, and the Dirac--Dunkl operator serves as its
square root.

\item For higher-dimensional $\rho$, the operators become
{matrix-valued}, mixing components through the reflection matrices
$\rho(s_\alpha)$.  The resulting models describe generalized
{spin-Calogero systems} with internal $S_{n+1}$ symmetry, but no longer
reduce to the scalar Calogero potential.

\item The Dirac--Dunkl framework thus unifies the scalar Calogero--Moser
equations, their fermionic analogues, and their matrix-valued spin
extensions under a single algebraic and geometric structure.
\end{itemize}

\smallskip

In conclusion, the presence or absence of the Calogero--Moser dynamics is
entirely determined by the representation content of the reflection group.
The Dirac--Dunkl operator provides a natural geometric mechanism to encode
these different realizations in a single unified formalism.

\end{document}